\newtheorem{definition}{Definition}
\newtheorem{theorem}{Theorem}
\newtheorem{proposition}{Proposition}
\begin{document}
\newenvironment{proof}{{\bf Proof:\ \ }}{\qed}
\newcommand{\qed}{\rule{0.5em}{1.5ex}}
\title{A New Count Model Generated from Mixed Poisson Transmuted Exponential Family with an application to Health Care Data}
\author[1]{Deepesh Bhati}
\author[1]{Pooja Kumawat}
\author[2]{E. G\'omez--D\'eniz}

\affil[1] {\textit{Department of Statistics, Central University of
Rajasthan}} \affil[2] {\textit{Department of Quantitative Methods
in Economics and TiDES Institute. University of Las Palmas de Gran
Canaria, Spain.}} \maketitle

\begin{abstract}
In this paper, a new mixed Poisson distribution is introduced. This new distribution is obtained by utilizing mixing process, with Poisson distribution as mixed distribution and Transmuted Exponential distribution as mixing distribution. Some distributional properties like unimodality, moments, over-dispersion, Taylor series expansion of proposed model are studied. Estimation of the parameters using method of moments, method of moments and proportion and maximum likelihood estimation along with data fitting experiment to show its advantage over some existing distribution.
Further, an actuarial applications in context of aggregate claim distribution is discussed. Finally, we  discuss a count regression model based on proposed distribution and its usefulness over some well established model.
\end{abstract}

\textbf{Keywords} Count Regression, Health Care Data, Over-dispersion, Mixed Poisson Distribution, Transmuted Exponential Distribution.

\vfill
\subsubsection*{Acknowledgements}
EGD was partially funded by grant ECO2013--47092 (Ministerio de
Econom\'ia y Competitividad, Spain).

\section{Introduction}
Recently, count data have drawn attention of many researchers working in different area of Insurance, economics, social sciences and biometrics. For this purpose, traditional models like Poisson, Negative binomial, Geometric and their generalizations were used. But often it has been found that count data exhibits over-dispersion (variance $>$ mean) and long tail behaviour. Hence there is further demand to modify/generalize these traditional models in encounters such problems. In last two decades, many attempts have been made by the researchers to develop new models, one such method which has been widely used to model count data is mixture of the distribution which have been widely used for modelling observed situations whose various characteristics as reflected by the data differ from those that would be anticipated under the simple component distribution, Karlis
and Xekalaki (2005). \\
Many count-data models proposed by mixing Poisson parameter $(\lambda)$ with various continuous distribution which overcomes the problems of under or equi-dispersion, for development in literature of Mixed Poisson distribution see the references below \\
\begin{itemize}
\item Poisson-Gamma (Negative Binomial) -Greenwood and Yule
    (1920)
\item Poisson Beta with specific parameter values (Yule)
    -Simon (1955)
\item Poisson Beta Type-2 -Gurland (1958)
\item Poisson-Exponential Beta -Pielou (1962)
\item Poisson Truncated Poisson -Patil (1964)
\item Poisson Beta Type 1 -Holla and Bhattacharya (1965)
\item Poisson Truncated Gamma -Bhattacharya(1966)
\item Poisson Linear Exponential -Sankaran(1969)
\item Poisson Lindley -Sankaran (1970)
\item Poisson Power Function  -Rai (1971)
\item Poisson Lognormal  -Bulmer (1974)
\item Poisson Generalized Inverse Gaussian -Sichel (1974)
\item Poisson Inverse Gaussian -Sichel(1975)
\item Poisson Gamma Product Ratio(Generalized Waring) -Irwin
    (1975)
\item Poisson Generalized Pareto -Kempton (1975)
\item Poisson-Poisson Distribution(Neyman) -Douglas (1980)
\item Poisson Pearson's Family of Distribution  -Albrecht
    (1982)
\item Poisson Generalized Gamma  -Albrecht (1984)
\item Poisson Truncated Beta Type 2  -Willmot (1986)
\item Poisson Log-Student -Gover and O' Muircheartigh (1987)
\item Poisson Shifted Gamma   -Ruhonen (1988)
\item Poisson Exponential(Geometric) -Johnson et al. (1992)
\item Poisson-Other Discrete Distribution(Neyman) -Johnson et
    al. (1992)
\item Poisson Linear Exponential -Kling and Goovaerts (1993)
\item Poisson Inverse Gamma - Willmot (1993)
\item Poisson Truncated Gamma -Willmot (1993)
\item Poisson Pareto -Willmot (1993)
\item Poisson Shifted Pareto -Willmot (1993)
\item Poisson Modified Bessel -Ong and Muthaloo (1995)
\item Poisson-Power Variance -Hougaard et al. (1997)
\item Poisson-Lomax -Al-Awadhi and Ghitany (2001)
\item Poisson Lindley Distribution(Size Biased) -Al-Multairi
    (2008)
\item Zero Truncated Poisson-Lindley -Ghitany et al. (2008)
\item Poisson Generalized-Lindley -Mahmoudi and Zakerzadah
    (2010)
\item Poisson-Lindley-Beta -G\'{o}mez--D\'eniz et al. (2014)
\item Poisson-Marshall-Olkin-Generalized Exponential
    -G\'{o}mez--D\'eniz et al. (2015)
\end{itemize}

The above monographs presents more flexible distributions as they can be used as building blocks for improving count data models. In the present paper, a new discrete distribution is obtained by mixing Poisson distribution with Transmuted-Exponential distribution proposed by Shaw and Buckley (2007). \\
\noindent To the best of our knowledge, we have not come across any literature on discrete distribution, particularly mixed Poisson distribution, where transmuted family is being used, except Charkraborty and Bhati(2015) who proposed the discrete version of Transmuted exponential family. \\

The rest of the paper is structured as follows. Section 2 describes the theoretical development of the new count distribution, including some properties, different methods of estimation are shown in Section 4. An actuarial application of the proposed model is examined in Section 5. Finally in Section 6, application of the proposed model in count data analysis and count regression model is presented. Some comments and conclusions are drawn in Section 7.

\section{Proposed model}
Shaw and Buckley (2007) proposed a novel technique to introduce skewness and kurtosis into symmetric as well as to other distribution. In this techniques they use ``transmutation map'', which is functional composition of cumulative distribution of one distribution with the quantile function of another. One such member of this transmuted family is ``Transmuted Exponential Distribution'' $\left(\mathcal{TED}(\alpha,\theta)\right)$ whose density function given as
\begin{equation} \label{1}
f(x)=\begin{cases}
0  \qquad & \text{for} \qquad  x<0, \\
\bar{\alpha}\theta e^{-\theta x}+2\alpha \theta e^{-2 \theta x} & \text{for}  \qquad  x \ge 0,
\end{cases}
\end{equation}
with $|\alpha|<1$ and $\theta>0.$ \\
\noindent Considering the fact that $\left(\mathcal{TED}(\alpha,\theta)\right)$ possess wide range of statistical properties as compared to exponential distribution, we introduce a new mixed Poisson distribution assuming $\left(\mathcal{TED}(\alpha,\theta)\right)$ as prior distribution for Poisson parameter $(\lambda)$. The definition of purposed model is as follows\\

\begin{definition} A random variable $X$ is said to follow Poisson-Transmuted
Exponential distribution if it possess following stochastic
representation
\begin{equation*}
\begin{aligned}
X|\lambda &\sim & Po(\lambda) \\
\lambda|\alpha,\theta & \sim & \mathcal{TED}(\alpha,\theta)
\end{aligned}
\end{equation*}
for $\lambda >0$, $|\alpha| \le 1$, $\bar{\alpha}=1-\alpha$ and $\theta>0$. We denote unconditional distribution of $X$ by $\mathcal{PTED}(\alpha,\theta)$ and its pmf is given theorem 1. \\
\end{definition}

\begin{theorem} If $X\sim \mathcal{PTED}(\alpha,\theta)$, then probability mass function (pmf) of \textit{X} is \\
\begin{equation*}
P(X=x;\alpha,\theta)=\theta \left(\frac{\bar{\alpha}}{(1+\theta)^{x+1}}+\frac{2 \alpha}{(1+2\theta)^{x+1}} \right),\qquad  x=0,1,\dots,
\end{equation*}
with $|\alpha| \le 1$ and $\theta >0$.
\end{theorem}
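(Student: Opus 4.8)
The plan is to compute the marginal pmf of $X$ directly from the mixture representation. By the law of total probability, since $X\mid\lambda\sim Po(\lambda)$ and $\lambda\sim\mathcal{TED}(\alpha,\theta)$ with density \eqref{1}, we have
$$P(X=x;\alpha,\theta)=\int_0^\infty \frac{e^{-\lambda}\lambda^x}{x!}\left(\bar{\alpha}\,\theta e^{-\theta\lambda}+2\alpha\,\theta e^{-2\theta\lambda}\right)\,d\lambda .$$
First I would split the integral into the two pieces coming from the two summands of the transmuted density, obtaining $\bar{\alpha}\,\theta\int_0^\infty \frac{\lambda^x}{x!}e^{-(1+\theta)\lambda}\,d\lambda + 2\alpha\,\theta\int_0^\infty \frac{\lambda^x}{x!}e^{-(1+2\theta)\lambda}\,d\lambda$; since each integrand is absolutely integrable on $(0,\infty)$, the split and all manipulations are justified with no interchange subtlety.

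Next I would evaluate each integral with the standard Gamma identity $\int_0^\infty \lambda^x e^{-s\lambda}\,d\lambda=\Gamma(x+1)/s^{x+1}=x!/s^{x+1}$, valid for any $s>0$; here $s=1+\theta>0$ in the first integral and $s=1+2\theta>0$ in the second, both positive because $\theta>0$. The factor $x!$ cancels in each term, leaving $\bar{\alpha}\,\theta/(1+\theta)^{x+1}$ and $2\alpha\,\theta/(1+2\theta)^{x+1}$, and factoring out $\theta$ gives exactly the stated expression $\theta\bigl(\bar{\alpha}(1+\theta)^{-(x+1)}+2\alpha(1+2\theta)^{-(x+1)}\bigr)$ for $x=0,1,\dots$.

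As a consistency check I would verify that this is a legitimate pmf: nonnegativity is automatic because it is a genuine Poisson mixture against the nonnegative density \eqref{1}, and $\sum_{x\ge 0}P(X=x;\alpha,\theta)=1$ follows from summing the two geometric series $\sum_{x\ge 0}(1+c\theta)^{-(x+1)}=1/(c\theta)$ with $c=1,2$, which reduces the total mass to $\bar{\alpha}+\alpha=1$. There is essentially no hard step here; the only points requiring a little care are keeping the exponents $x+1$ straight through the Gamma-integral evaluation, the cancellation of $x!$, and recording the parameter constraints $|\alpha|\le 1$, $\theta>0$ that guarantee convergence of the integrals.
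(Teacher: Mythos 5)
Your proposal is correct and follows essentially the same route as the paper: integrate the Poisson pmf against the transmuted exponential density, split into the two exponential components, and evaluate each via the Gamma integral $\int_0^\infty \lambda^x e^{-s\lambda}\,d\lambda = x!/s^{x+1}$. The added normalization check is a nice touch but not part of the paper's argument.
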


\noindent \textit{Proof:} If $X|\lambda \sim Pois(\lambda)$ and $\lambda \sim \mathcal{TED}(\alpha,\theta)$ then
\begin{align} \label{2}
\nonumber P\left(X=x\right)=\; &\theta\int\limits_{0}^{\infty}P(X=x|\lambda)f_{\Lambda}(\lambda;\alpha,\theta) d\lambda \\
\nonumber  = \;& \theta\int\limits_{0}^{\infty}\frac{e^{-\lambda}\lambda^{x} }{x!}\theta e^{-\theta\lambda}\left(\bar{\alpha}+2\alpha e^{-\theta}\lambda\right)d\lambda \\
\nonumber =\;&\frac{\theta}{x!}\int\limits_{0}^{\infty}e^{-\lambda}\lambda^{x}\left(\bar{\alpha}+2\alpha e^{-\theta}\lambda\right) e^{-\theta\lambda} d\lambda\\
=\; &\theta\left(\frac{\bar{\alpha}}{(1+\theta)^{x+1}}+\frac{2\alpha}{(1+2\theta)^{x+1}}\right),
\end{align}
where $|\alpha| \leq 1$ and $\theta > 0$.
\begin{flushright}
$\blacksquare$
\end{flushright}

\noindent \textbf{Remarks}
\begin{enumerate}
\item[i] It is easy to see that the p.m.f. in (\ref{2}) can be rewritten as a finite mixture of two
geometric distribution in the form,
\begin{eqnarray} \label{3}
p_x=\bar\alpha\; \mbox{Geom}\left(\frac{1}{1+\theta}\right)
+\alpha\;\mbox{Geom}\left(\frac{1}{1+2\theta}\right).\label{mixture}
\end{eqnarray}
\item[ii] For $\alpha\rightarrow 0$ pmf (\ref{2}) reduces  to Geometric distribution $Geo\left(\frac{1}{1+\theta}\right) $.
\item[iii] For $\alpha\rightarrow 1$ pmf (\ref{2}) reduces to Geometric distribution $\left(\frac{1}{1+2\theta}\right)$.
\end{enumerate}
\noindent Further, the cumulative distribution function and the survival function of  $X\sim$ $\mathcal{PTED}(\alpha,\theta)$ can be given as
\begin{align} \label{4}
\nonumber F_{X}\left( x\right) =P(X\leq x)= &\sum\limits_{n=0}^{x}\theta\left( \frac{\bar{\alpha}}{(1+\theta)^{n+1}}+\frac{2\alpha}{(1+2\theta) ^{n+1}}\right)\\
= & 1-\left(\frac{\bar{\alpha}}{(1+\theta)^{x+1}}+\frac{\alpha}{(1+2\theta)^{x+1}}\right)
\end{align}
\noindent and
\begin{equation} \label{5}
S_{X}(x)=P(X \ge x)=\frac{\bar{\alpha}}{(1+\theta)^{x}}+\frac{\alpha}{(1+2\theta)^{x}}
\end{equation}\\
\noindent Figure 1 shows the pmf of $\mathcal{PTED}(\alpha,\theta)$ for different values of $\alpha, \theta$ which confirms the unimodality of random variable, moreover, this can proved mathematically in proposition 1. \\

\begin{figure} \label{f1}
\centering
\includegraphics[width=1\textwidth]{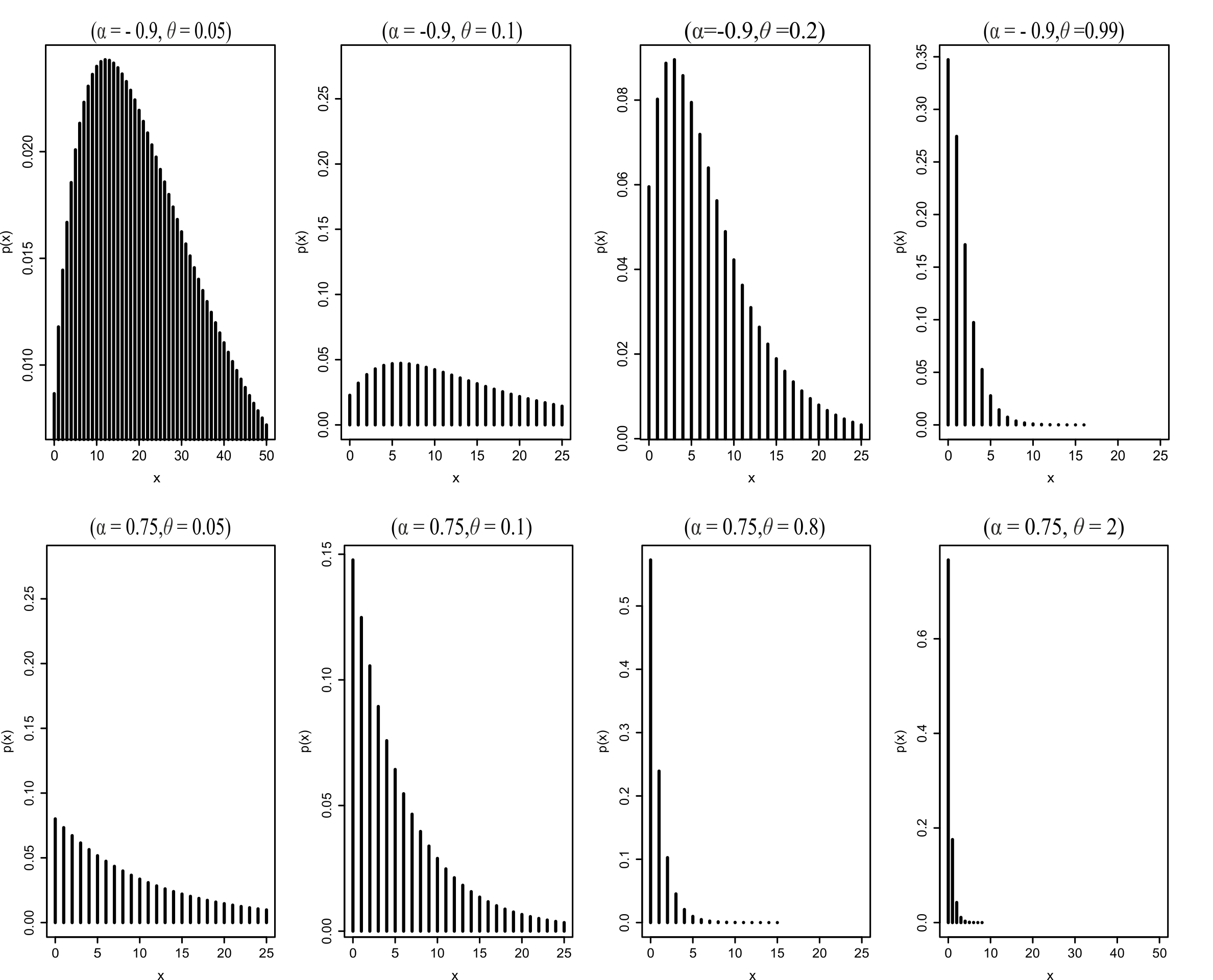}
\caption{PMF plot for different values of $\alpha$ and $\theta$.}
\end{figure}

\begin{proposition} The Proposed distribution with p.m.f. (\ref{2}) is unimodal.
\end{proposition}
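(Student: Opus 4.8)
The plan is to study the sign of the first difference $d_x := p_{x+1}-p_x$ for $x=0,1,2,\dots$ and to show it changes sign at most once, going from nonnegative to negative values, which is exactly the statement that $p_x$ first increases and then decreases, i.e.\ is unimodal. First I would insert the pmf from Theorem~1 and simplify. Factoring out the geometric terms $(1+\theta)^{-(x+2)}$ and $(1+2\theta)^{-(x+2)}$ gives
\[
d_x = -\theta^{2}\left(\frac{\bar\alpha}{(1+\theta)^{x+2}}+\frac{4\alpha}{(1+2\theta)^{x+2}}\right),
\]
so that $\operatorname{sign}(d_x)$ is the opposite of $\operatorname{sign}\big(S(x)\big)$, where $S(x):=\bar\alpha(1+\theta)^{-(x+2)}+4\alpha(1+2\theta)^{-(x+2)}$.

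Next, since $(1+2\theta)^{x+2}>0$, the sign of $S(x)$ agrees with that of $\tilde S(x):=\bar\alpha\left(\frac{1+2\theta}{1+\theta}\right)^{x+2}+4\alpha$. The two structural facts I would use here are that $\bar\alpha=1-\alpha\ge 0$ (because $|\alpha|\le 1$) and that $\frac{1+2\theta}{1+\theta}>1$ for $\theta>0$, so the power $\left(\frac{1+2\theta}{1+\theta}\right)^{x+2}$ is strictly increasing in $x$ and diverges. Consequently $\tilde S$ is a non-decreasing sequence, hence it is $\le 0$ for $x$ below some threshold $m$ and strictly positive for $x\ge m$. Translating back, $d_x\ge 0$ for $x<m$ and $d_x<0$ for $x\ge m$, i.e.\ $p_0\le p_1\le\cdots\le p_m>p_{m+1}>\cdots$, which is unimodality with mode at $m$ (the degenerate values $m=0$ and $m=\infty$ are covered by the same statement; when $\alpha\ge 0$ one sees $\tilde S(x)>0$ for every $x$, so $m=0$ and $p_x$ is strictly decreasing, consistent with the finite-mixture representation~(\ref{mixture})).

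I do not expect a genuine obstacle; the only point needing a little care is that for $\alpha<0$ the pmf is a \emph{signed} combination of two geometric pmfs, so one cannot simply invoke ``a mixture of unimodal distributions'' — the monotonicity of $\tilde S(x)$ in $x$ is precisely what replaces that argument. The remaining bookkeeping (the boundary cases $\alpha=0$ and $\alpha=1$, and the possibility that $\tilde S$ vanishes exactly at an integer, in which case there are two adjacent modal values, still within the definition of unimodality) is routine case management rather than anything deep.
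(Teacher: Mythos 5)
Your argument is correct, and it is essentially a rigorous, self-contained version of what the paper only asserts. The paper writes the ratio $p_{x+1}/p_x$ via the recursion (\ref{6}), states the single-crossing property (\ref{7}) with a citation to Keilson and Gerber (1971) but without actually verifying it, and then leans on the second-order quantity $p_x^2-p_{x+1}p_{x-1}=-2\alpha\bar\alpha\theta^2/\bigl((1+\theta)^{x+2}(1+2\theta)^{x+2}\bigr)$ to invoke log-concavity (hence strong unimodality) for one sign of $\alpha$ and log-convexity/infinite divisibility (hence a decreasing pmf with mode at zero) for the other. Your route works directly with the first difference: the identity $p_{x+1}-p_x=-\theta^2\bigl(\bar\alpha(1+\theta)^{-(x+2)}+4\alpha(1+2\theta)^{-(x+2)}\bigr)$ is correct, and since $\bar\alpha\ge 0$ and $(1+2\theta)/(1+\theta)>1$, the normalized quantity $\tilde S(x)=\bar\alpha\bigl(\tfrac{1+2\theta}{1+\theta}\bigr)^{x+2}+4\alpha$ is indeed non-decreasing (strictly increasing unless $\alpha=1$), so the differences change sign at most once, from nonnegative to negative. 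This is more elementary than the paper's log-concavity detour, handles all of $|\alpha|\le 1$ uniformly, and as a by-product gives the mode location by solving $\tilde S(x)=0$. In fact, carrying out that last step exposes a sign error in the paper: $\tilde S(0)>0$ (mode at zero) is equivalent to $\alpha>-(1+2\theta)^2/(3+4\theta)$, which is the reverse of the parameter ranges listed at the end of the paper's proof; a quick numerical check ($\theta=0.1$, $\alpha=-0.9$ gives $p_1>p_0$) confirms your direction. What your approach does not deliver, and the paper's second thread does, are the auxiliary structural consequences (log-concavity, compound-Poisson representation, the variance bound), but none of those are needed for unimodality itself.
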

\noindent \textit{Proof:} The probabilities defined in (\ref{2}) can also be computed by following recursive relation
\begin{equation} \label{6}
p\left(x+1\right)=\frac{1}{1+\theta}\left(\frac{\bar{\alpha}+2\alpha\left( \frac{1+\theta}{1+2\theta}\right)^{x+1} }{\bar{\alpha}+2\alpha\left( \frac{1+\theta}{1+2\theta}\right)^{x+2}}\right) p\left(x\right)
\end{equation}
with\indent
 \begin{equation*}
 p(0)=\frac{\theta\left(1+\alpha+2 \theta\right)}{\left(1+\theta\right)\left(1+2\theta\right)}.
 \end{equation*}\\
\noindent Moreover, it can be observe that (\ref{6}) possess relation
\begin{eqnarray} \label{7}
\begin{cases} p_{x+1} \ge p_x \qquad & \forall \quad x\le x^*, \\
p_{x+1} \le p_x \qquad & \forall \quad x\ge x^*,
\end{cases}
\end{eqnarray}
for $x^*$(mode of $\mathcal{PTED}$ r.v.) depending on parameters $\alpha$ and $\theta$, which indicate the unimodality, see Keilson and Gerber (1971).\\

Further,
\begin{equation*}
p^2_{x}-p_{x+1} p_{x-1} = -\frac{2\alpha(\bar{\alpha})\theta^2}{(1+\theta)^{x+2}(1+2\theta)^{x+2}}.
\end{equation*}

For $-1<\alpha<0$, the distribution is infinitely divisible since
$p_x^2-p_{x-1}p_{x+1}<0$ and $p_0\neq 0,\;p_1\neq 0$, see Warde
and Katti (1971) for details. Then, in this case the distribution
has its mode at zero. The fact that
$\left\{p_x/p_{x-1}\right\},\;x=1,2,\dots$, forms a monotone
increasing sequence in this case requires that $\{p_x\}$ be a
decreasing sequence (see Johnson and Kotz, 1982, p.75), which is
congruent with the zero vertex of the new distribution for
$-1<\alpha<0$. Moreover, as any infinitely divisible distribution
defined on non-negative integers is a compound Poisson
distribution (see Proposition 9 in Karlis and Xekalaki, 2005), we
conclude that the new pmf presented in this paper is a compound
Poisson distribution.

Furthermore, the infinitely divisible distribution plays an important role in many areas of statistics, for example, in
stochastic processes and in actuarial statistics. When a distribution $G$ is infinitely divisible then for any integer
$x\geq 2$, there exists a distribution $G_x$ such that $G$ is the $x$--fold convolution of $G_x$, namely, $G = G_x^{\ast x}$.
Also, when a distribution is infinitely divisible an upper bound for the variance can be obtained when $-1<\alpha<0$ (see Johnson and Kotz, 1982, p.75), which is given by
\begin{eqnarray*}
var(X)\leq \frac{p_1}{p_0}.
\end{eqnarray*}

\noindent Some implications of infinite divisibility are as follows:
\begin{enumerate}
\item For $k,\;m=1,2,\dots$,
\begin{eqnarray*}
{k+m\choose m}p_{k+m}p_0\geq p_k p_m.
\end{eqnarray*}

See Steutel and van Harn (2004), p. 51, Proposition 8.4.

\item For all $k=0,1,\dots$, $p_x\leq \exp(-1)$. See Steutel and van Harn (2004), p. 56, Proposition 9.2.

\item The cumulants of an infintely divisible distribution on the set of non--negative integers (as far as they exist) are non negative, see Steutel and van Harn (2004), p. 47, Corollary 7.2. This will imply that the skewness of the new distribution is positive, since the third cumulant equals the third central moment.

\item The distribution is strictly log-concave and strongly unimodal, see theorem 3 in Keilson and Gerber (1971).
\end{enumerate}

As indicate in above remark, since (\ref{2}) is log-concave and zero vertex for some value of parameter $\alpha$ and $\theta$, we can determine the mode$(x^*)$ of $\mathcal{PTED}(\alpha,\theta)$, which will be determine such the $P(X=x)$ is increasing on $(0,1,\cdots,x^*)$ and decreasing on $(x^*+1,\cdots)$. It can be easily verified-
\begin{enumerate}
\item For $-1<\alpha<-\frac{(1+2\theta)^2}{3+4\theta}$, the pmf (\ref{2}) is unimodal and the mode is at $x_0=0$.
\item For $ -\frac{(1+2\theta)^2}{3+4\theta}<\alpha<1$, the mode of pmf (\ref{2}) will be at $x^*+1$, with
\indent $x^*=\left[\log_{\theta^*}\left(\frac{\alpha-1}{4 \alpha}\right)-2\right]$, where $\theta^*=\frac{1+\theta}{1+2\theta}$. Here [.] denotes the integer part.
\item Further if $\left[\log_{\theta^*}\left(\frac{\alpha-1}{4 \alpha}\right)-2 \right]$ is an integer, then pmf (\ref{2}) will be bimodal and the mode is at $x^*$ and $x^*+1$ respectively.
\begin{flushright}
$\blacksquare$
\end{flushright}
\end{enumerate}

\noindent Moreover, the new distribution is as a mixed Poisson distribution it has a heavier tail than a Poisson distribution with the same mean. Now, let $\Pr(X=x)$ be the probability function (\ref{2}) and $\Pr(X|m)$ be the probability function of a simple Poisson distribution with the same mean, say $m$. Then, as shown by Feller (1943), $\Pr(X=0)>\Pr(0|m)$ and $\Pr(X=1)/\Pr(X=0)\leq \Pr(1|m)/\Pr(0|m)=m$. The asymptotic tail behavior of Poisson
distributions was studied for Willmot (1990). \\

\noindent In the following results, we present various properties such as Taylor Expansion for $\mathcal{PTE}$ probabilities , probability generating function, moments, hazard function of $\mathcal{PTE}(\alpha,\theta)$ \\

\subsubsection*{Taylor expansion of $\mathcal{PTE}(\alpha,\theta)$ probabilities}
\noindent Ong(1995) gave the Taylor series expansion of mixed poisson distribution which is stated as: \\
\noindent Let $g(x)$ be the probability density function (pdf) of the mixing distribution of a mixed Poisson distribution. If g(x) has a finite $n^{th}$ derivative at the point $k \ge 1$ for all $n$, then the mixed Poisson pmf $P(k)$ has the formal expansion
\begin{equation} \label{8}
P(k)=g(k)+\frac{1}{k}\left(\mu_2 \frac{f^{(2)}(k)}{2!}+\mu_3 \frac{f^{(3)}(k)}{3!}+ \cdots +\mu_n \frac{f^{(n)}(k)}{n!} \right)+\cdots
\end{equation}
where $f(x)=x g(x)$, $f^{(i)}(x)=\frac{d^if(x)}{dx^i}$ and $\mu_i$ is the $i^{th}$ moment about the mean of the gamma random variable $X$ with scale and shape parameters 1 and $k$ respectively. \\
In the proposed model, considering $f(x)=x g(x)=x\theta e^{-\theta x}\left(\bar{\alpha}+\alpha e^{-\theta x} \right)$, and $f^{(i)}(x)=\theta^i e^{-2 \theta x}\left((-1)^{i-1}e^{\theta x}(\bar{\alpha}(i-\theta x)+ (-1)^{i-1} 2^{i-1}\alpha(i-2\theta x)\right)$ and using (7), the probability of $\mathcal{PTED}(\alpha,\theta)$ can easily be obtained. \\

\begin{proposition} The probability generating function(pgf) of random variable
$X$ defined in (\ref{2}) is given as
\begin{equation} \label{9}
P_X(t)= \frac{\theta(1-t)(1+\alpha)+2\theta^2}{(1-t+\theta)(1-t+2\theta)}.
\end{equation}
\end{proposition}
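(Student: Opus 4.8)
The plan is to obtain $P_X(t)=\sum_{x=0}^{\infty}t^{x}\,P(X=x)$ directly from the pmf of Theorem~1, the whole computation reducing to the summation of two geometric series. Inserting $P(X=x)=\theta\bigl(\bar{\alpha}(1+\theta)^{-(x+1)}+2\alpha(1+2\theta)^{-(x+1)}\bigr)$ and splitting the sum over the two terms, one is left with $\sum_{x\ge 0}t^{x}(1+\theta)^{-(x+1)}$ and $\sum_{x\ge 0}t^{x}(1+2\theta)^{-(x+1)}$. Since $\theta>0$, the common ratios $t/(1+\theta)$ and $t/(1+2\theta)$ are strictly less than one in modulus on a neighbourhood of $t=1$, so the interchange of summation and the closed forms $\sum_{x\ge 0}t^{x}(1+\theta)^{-(x+1)}=1/(1+\theta-t)$ and $\sum_{x\ge 0}t^{x}(1+2\theta)^{-(x+1)}=1/(1+2\theta-t)$ are justified, and $P_X(t)$ is analytic there. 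This yields the intermediate form $P_X(t)=\dfrac{\theta\bar{\alpha}}{1+\theta-t}+\dfrac{2\theta\alpha}{1+2\theta-t}$, which is exactly the mixture representation of Remark~(i) seen through the pgf of a geometric law. Placing the two fractions over the common denominator $(1-t+\theta)(1-t+2\theta)$ and simplifying the numerator, one checks that $\bar{\alpha}(1+2\theta-t)+2\alpha(1+\theta-t)=(1-t)(1+\alpha)+2\theta$, whence $P_X(t)=\dfrac{\theta(1-t)(1+\alpha)+2\theta^{2}}{(1-t+\theta)(1-t+2\theta)}$, as claimed.

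An alternative, arguably cleaner, route uses the stochastic representation of Definition~1: $P_X(t)=E\bigl[E[t^{X}\mid\lambda]\bigr]=E\bigl[e^{\lambda(t-1)}\bigr]=M_{\Lambda}(t-1)$, where $M_{\Lambda}$ is the moment generating function of $\Lambda\sim\mathcal{TED}(\alpha,\theta)$. Because the $\mathcal{TED}$ density in \eqref{1} is the $\bar{\alpha}$--$\alpha$ mixture of the $\mathrm{Exp}(\theta)$ and $\mathrm{Exp}(2\theta)$ densities, one has $M_{\Lambda}(s)=\bar{\alpha}\,\dfrac{\theta}{\theta-s}+\alpha\,\dfrac{2\theta}{2\theta-s}$ for $s<\theta$; evaluating at $s=t-1$ recovers the same two-term expression, after which the algebra is identical. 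Either way, the convergence in this step is covered by $\theta>0$, and the pgf is well defined for $|t|<1+\theta$.

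There is no genuine obstacle here; the only two points that deserve a line of care are (a) the justification of the geometric-series manipulation, which rests on $\theta>0$ forcing the ratios into $(-1,1)$ near $t=1$, and (b) the bookkeeping in the numerator, where the cancellation of the $t$--linear and $\theta$--free pieces carried by $\alpha$ against one another is precisely what produces the compact factor $(1-t)(1+\alpha)+2\theta$. As a sanity check one may verify $P_X(1)=2\theta^{2}/(\theta\cdot 2\theta)=1$, and that $P_X'(1)$ reproduces the mean of $\mathcal{PTED}(\alpha,\theta)$.
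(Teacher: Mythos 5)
Your proof is correct and follows essentially the same route as the paper, which simply computes $P_X(t)=E(t^X)$ directly from the pmf in (\ref{2}) by summing the two geometric series; your algebra, including the reduction of the numerator to $\theta(1-t)(1+\alpha)+2\theta^2$, checks out. The alternative you sketch via $M_\Lambda(t-1)$ is a pleasant bonus but not needed; the paper's one-line proof is exactly your first argument.
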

\begin{proof}
The proof is  straight forward after using the definition
$P_X(t)=E(t^X)$ and (\ref{2}).
\end{proof}

\begin{proposition} The $r^{th}$ raw moment of $\mathcal{PTED}(\alpha,\theta)$
is given by
\begin{equation} \label{10}
E(X^{r})= \frac{\theta(\bar{\alpha})\Phi(\frac{1}{1+\theta},-r,0)}{1+\theta}+\frac{2\theta\alpha \Phi(\frac{1}{1+2\theta},-r,0)}{1+2\theta},
\end{equation}
where $\Phi(z,s,a)=\sum\limits_{k=0}^{\infty} z^k(k+a)^{-s}$ is
Hurwitz-Lerch transcendent function.
\end{proposition}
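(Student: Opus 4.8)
The plan is to compute the moment directly from the closed-form pmf of Theorem~1, which makes the Hurwitz--Lerch transcendent appear almost by inspection. Starting from $E(X^{r})=\sum_{x=0}^{\infty}x^{r}P(X=x)$ and substituting the pmf, I would split the series along its two summands,
\[
E(X^{r})=\theta\bar{\alpha}\sum_{x=0}^{\infty}\frac{x^{r}}{(1+\theta)^{x+1}}+2\theta\alpha\sum_{x=0}^{\infty}\frac{x^{r}}{(1+2\theta)^{x+1}},
\]
and pull $1/(1+\theta)$ (respectively $1/(1+2\theta)$) out of the corresponding series, so that each reduces to a power series of the form $\sum_{x\ge 0}x^{r}z^{x}$ with $z=1/(1+\theta)$ (resp. $z=1/(1+2\theta)$).

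The second step is to recognise $\sum_{x=0}^{\infty}x^{r}z^{x}$ as the Hurwitz--Lerch transcendent $\Phi(z,s,a)=\sum_{k\ge 0}z^{k}(k+a)^{-s}$ evaluated at $s=-r$ and $a=0$, i.e. $\sum_{x\ge 0}x^{r}z^{x}=\Phi(z,-r,0)$. Here one must only be mindful of the $x=0$ term, which vanishes for $r\ge 1$ and equals $1$ for $r=0$ under the convention $0^{0}=1$, consistently with $E(X^{0})=1$. Since $\theta>0$ forces $1/(1+\theta)<1$ and $1/(1+2\theta)<1$, both series converge absolutely, so the term-by-term splitting and the identification are legitimate. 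Substituting back gives exactly
\[
E(X^{r})=\frac{\theta\bar{\alpha}}{1+\theta}\,\Phi\!\left(\tfrac{1}{1+\theta},-r,0\right)+\frac{2\theta\alpha}{1+2\theta}\,\Phi\!\left(\tfrac{1}{1+2\theta},-r,0\right).
\]

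An equivalent and perhaps conceptually cleaner route uses the two-component geometric mixture of Remark~(i): the $r$-th raw moment of a geometric law with pmf $(1-q)q^{x}$, $x=0,1,\dots$, is $(1-q)\,\Phi(q,-r,0)$, and taking the $\bar{\alpha}$/$\alpha$-weighted combination with $q=1/(1+\theta)$ and $q=1/(1+2\theta)$ (so that $1-q$ equals $\theta/(1+\theta)$ and $2\theta/(1+2\theta)$ respectively) reproduces the formula by linearity of expectation. Differentiating the pgf of Proposition~2 would also work but is less convenient for general $r$, since it delivers factorial rather than raw moments.

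There is no genuine obstacle: the only points needing a little care are the bookkeeping of the $x=0$ term, so that the Hurwitz--Lerch function is matched with the stated arguments $(z,-r,0)$ and not a shifted version, and the verification of convergence — both of which are immediate consequences of $\theta>0$.
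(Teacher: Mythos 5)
Your argument is correct: splitting $E(X^r)=\sum_x x^r p_x$ along the two geometric components and identifying $\sum_{x\ge 0}x^r z^x=\Phi(z,-r,0)$ with $z=1/(1+\theta)$ and $z=1/(1+2\theta)$ gives exactly the stated formula, and your check against $\mu_1'=(2-\alpha)/(2\theta)$ would confirm it. The paper states this proposition without proof, and your direct series computation is precisely the evident intended route, with the convergence and $x=0$ bookkeeping handled appropriately.
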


In particular, the first four raw moments of $X$ can be obtained easily by putting r=1,2,3,4 in (\ref{10}) and are as follows\\
\begin{equation*}
\begin{split}
\mu^{'}_{1}=&\frac{2-\alpha}{2\theta},\\
\mu^{'}_{2}=&\frac{4-3 \alpha +2 \theta -\alpha  \theta }{2 \theta ^2},\\
\mu^{'}_{3}=&\frac{24+24\theta+4\theta^{2}-21\alpha-18\theta\alpha-2\theta^{2}\alpha}{4\theta^{3}},\\
\mu^{'}_{4}=&\frac{48+72\theta+28\theta^{2}+2\theta^{3}-45\alpha-63\theta\alpha-21\theta^{2}-\theta^{3}\alpha}{2\theta^{4}},\\
\end{split}
\end{equation*}
whereas other measures like variance$(\mu_2)$, coefficient of skewness $\left( \gamma_1=\frac{\mu_{3}}{\mu^{3/2}_{2}}\right) $ and kurtosis  $\left(\gamma_2= \frac{\mu_{4}}{\mu^{2}_{2}}\right) $ are as follows
\begin{align}
\mu_{2}=&\frac{4+2\theta(2-\alpha)-\alpha(2+\alpha)}{4\theta^{2}},\\
\gamma_1 =& \frac{2(8+4\theta(3+\theta)-3\alpha-2\theta\alpha(3+\theta)-3\alpha^{2}(1+\theta)-\alpha^{3})}{\left[4-2\theta(\alpha-2)-\alpha(\alpha+2) \right]^{3/2} },\\
\gamma_2=&\frac{16(1+\theta)(9+\theta(9+\theta))-8(1+\theta)(9+\theta(12+\theta))\alpha-8(6+\theta(9+2\theta))\alpha^{2}-12(1+\theta)\alpha^{3}-3\alpha^{4}}{(4-2\theta(\alpha-2)-\alpha(\alpha+2))^{2}}.
\end{align}
In figure 2 the contour plot of Mean, Variance, Skewness and Kurtosis for parameters $\alpha$ and $\theta$ were shown.

\begin{figure} \label{f2}
\centering
\includegraphics[width=1\textwidth]{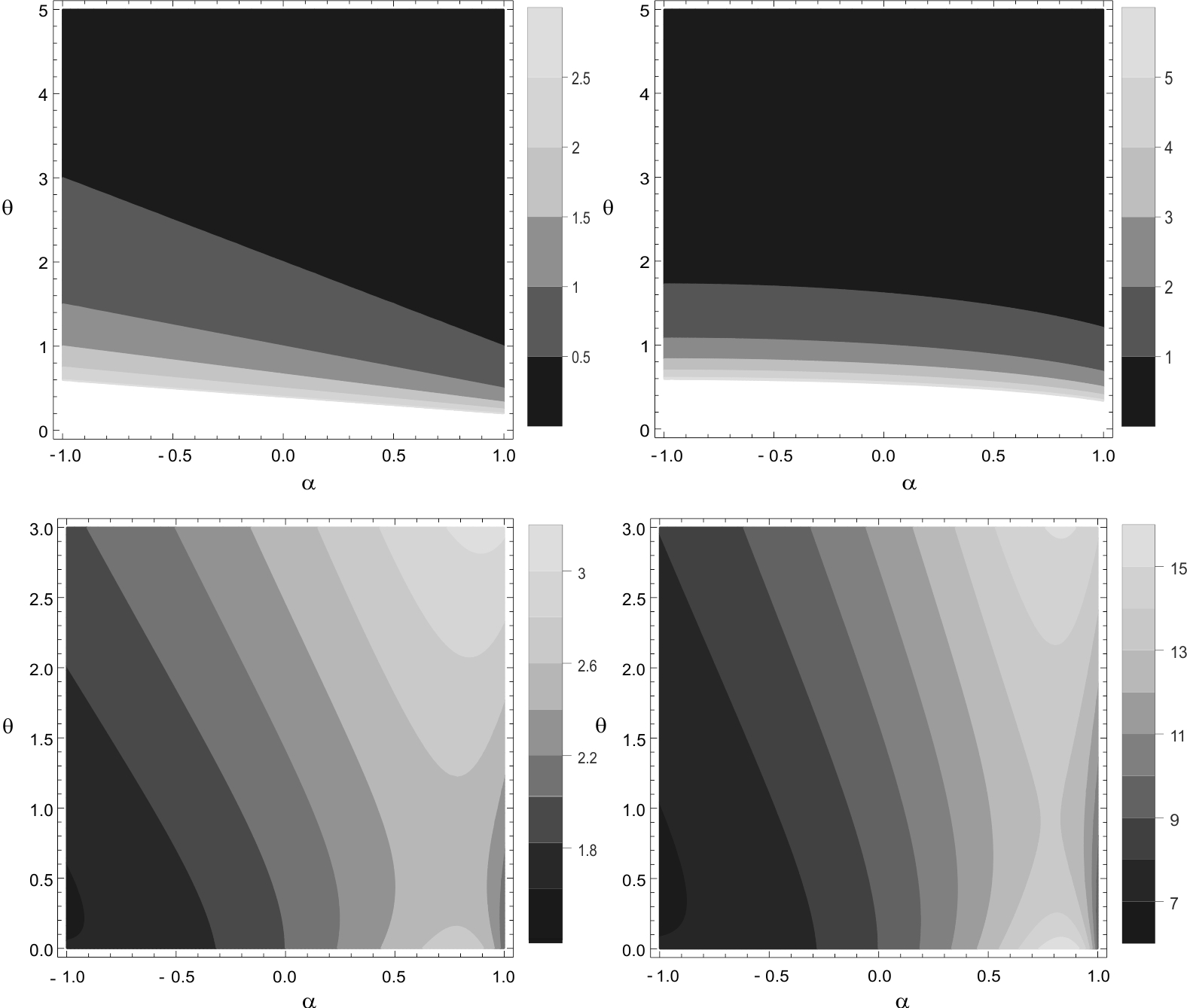}
\caption{From left top-in clockwise direction we present contour plot of Mean, Variance, Kurtosis and Skewness of $\mathcal{PTED}(\alpha,\theta)$}
\end{figure}

Further, the coefficient of variation(C.V.) of the distribution
comes out to be
\begin{equation*}
C.V. = \frac{\sigma}{\mu} =\frac{\sqrt{4-2\theta(\alpha-2)-\alpha(\alpha+2)}}{2-\alpha}.
\end{equation*}

\section{Steps to generate $\mathcal{PTED}(\alpha,\theta)$ random variate}

As the proposed model is derived from mixed Poisson distribution, following algorithm can be used to get $\mathcal{PTED}(\alpha, \theta)$ random variable \\

\subsubsection*{Algorithm}
\begin{enumerate}
\item[Step 1] Generate $u_i$ from $U(0,1)$.
\item[Step 2] Corresponding to each $u_i$, determine
    $\lambda_i =
    \frac{1}{\theta}\log\left(\frac{1-\alpha+\sqrt{1+2\alpha-4u_i\alpha+\alpha^2}}{2(1-u_i)}\right)$.
\item[Step 3] Hence, generate $x_i$ from $Poisson(\lambda_i)$.
\end{enumerate}

\section{Estimation}
In this section we discuss three methods to estimate the parameters $\alpha$ and $\theta$, In the first subsection the method of moment is presented. further in second section, method of proportion and moment is discussed and finally Maximum Likelihood method is shown as last subsection.

\subsection{Method of moments}
Given a random sample $x_1,x_2, \cdots, x_n$ of size $ n $ from (1), the moment estimates, $\tilde{\alpha}$ and $\tilde{\theta}$, of $\alpha$ and $ \theta $ can be obtained by solving the following equations
\begin{equation}
m_1=\mu _1'=\frac{2-\alpha}{2 \theta}
\qquad \text{and} \qquad
m_2=\mu _2'=\frac{4-3\alpha+2\theta-\alpha\theta}{2\theta ^2}.
\end{equation}
where $ m_1 $ and $ m_2 $ are the first and second sample moments. Solving the above equations, we get
\begin{align}
\nonumber \tilde{\alpha}&=2+\frac{3 m_1^2}{m_1-m_2}+\frac{m_1 \sqrt{4 m_1+9 m_1^2-4 m_2}}{m_1-m_2},\\
\tilde{\theta}&=\frac{-3 m_1-\sqrt{4 m_1+9 m_1^2-4 m_2}}{2 (m_1-m_2)}.
\end{align}

\begin{theorem} For fixed $\alpha$, the estimator $\tilde{\theta}$ of
$\theta$ is positively biased, i.e. $E(\tilde{\theta})>\theta$.
\end{theorem}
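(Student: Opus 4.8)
The plan is to use the fact that when $\alpha$ is held fixed only one moment equation is needed to identify $\theta$, namely $m_1=\mu_1'=(2-\alpha)/(2\theta)$, so the moment estimator becomes
\begin{equation*}
\tilde\theta=\frac{2-\alpha}{2\,\bar X_n},\qquad \bar X_n=\frac1n\sum_{i=1}^n X_i ,
\end{equation*}
and then to apply Jensen's inequality to the strictly convex map $g(x)=1/x$ on $(0,\infty)$.

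First I would record from Proposition 3 (the $r=1$ case) that $E(\bar X_n)=\mu_1'=(2-\alpha)/(2\theta)$, and note that $\bar X_n$ is non-degenerate because $\mathcal{PTED}(\alpha,\theta)$ has strictly positive variance $\mu_2=\bigl(4+2\theta(2-\alpha)-\alpha(2+\alpha)\bigr)/(4\theta^2)>0$ for every admissible pair $(\alpha,\theta)$. Next, since $g''(x)=2/x^3>0$, $g$ is strictly convex on $(0,\infty)$, so Jensen's inequality for a non-degenerate random variable gives the strict bound
\begin{equation*}
E\!\left[\frac{1}{\bar X_n}\right]>\frac{1}{E[\bar X_n]}=\frac{2\theta}{2-\alpha}.
\end{equation*}
Multiplying through by $(2-\alpha)/2>0$ yields $E(\tilde\theta)>\theta$, which is exactly the assertion.

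The only delicate point is the support of $\bar X_n$: because $p_0=P(X=0)>0$, the event $\{\bar X_n=0\}$ has probability $p_0^{\,n}>0$, on which $1/\bar X_n=+\infty$; strictly speaking this already forces $E(\tilde\theta)=+\infty$ and makes the inequality trivially true. To present the result in a meaningful (finite) form I would either state and prove it conditionally on $\{\bar X_n>0\}$, where the Jensen step applies verbatim with a genuinely finite left-hand side, or restrict attention to the conditional law of $\bar X_n$ on $(0,\infty)$ and carry the strict inequality through.

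The substantive content of the proof is entirely the convexity/Jensen argument; everything else is the routine substitution of $\mu_1'$ from Proposition 3 and a sign check on $2-\alpha$. Consequently I do not anticipate any genuine obstacle, and the main thing to get right in the write-up is a clean statement of the non-degeneracy (hence strictness) together with the caveat about the atom at zero; if one prefers to avoid the caveat, the cleanest route is to phrase the theorem and its proof conditionally on $\bar X_n>0$.
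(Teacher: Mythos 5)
Your proposal is correct and follows essentially the same route as the paper: both define $\tilde\theta=g(\bar X)$ with $g(t)=(2-\alpha)/(2t)$, check $g''>0$, and invoke Jensen's inequality together with $g\left(E(\bar X)\right)=\theta$. You additionally flag the atom of $\bar X$ at zero (which makes $E(\tilde\theta)$ infinite with positive probability of $\bar X=0$), a genuine subtlety that the paper's proof silently ignores.
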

\begin{proof} Let $\tilde{\theta}=g(\bar{X})$ and
$g(t)=\frac{2-\alpha}{2t}$ for $t>0$. Then
\begin{equation*}
g''(t)=\frac{2-\alpha}{t^3} >0.
\end{equation*}

Therefore, $g(t)$ is strictly convex. Thus, by Jensens inequality,
we have $E\left(g(\bar{X})\right)>g\left( E(\bar{X})\right)$.
Finally, since $g\left(
E(\bar{X})\right)=g\left(\frac{2-\alpha}{2\theta}\right)=\theta$,
hence we obtain $E(\tilde{\theta})>\theta$.
\end{proof}

\subsection{Method of proportion and moment}
In this method, we compare the sample proportion of zero($p_0$) and sample mean ($\bar{x}$) with the population proportion of zero and population mean, i.e, estimated $\breve{\alpha}$ and $\breve{\theta}$ will be obtained by solving the following two equations
\begin{equation}
p_0=\theta \left(\frac{\bar{\alpha}}{1+\theta}+\frac{2\alpha}{1+2\theta} \right) \qquad \text{and} \qquad  \bar{x}=\frac{2-\alpha}{2\theta}.
\end{equation}

Thus, solving the above two equations, we obtain the estimates $\breve{\alpha}$ and $\breve{\theta}$ as follows
\begin{align}
\nonumber \breve{\alpha}=& \frac{1}{2} \left(4-\frac{3 \bar{x}}{\bar{x}+p_0-1}+\frac{3 \bar{x} p_0}{\bar{x}+p_0-1}-\frac{\bar{x} \sqrt{9-10 p_0-8 \bar{x} p_0+p_0^2}}{\bar{x}+p_0-1}\right), \\
\breve{\theta}=& \frac{3-3 p_0+\sqrt{9-10 p_0-8 \bar{x} p+p_0^2}}{4 (\bar{x}+p_0-1)}.
\end{align}

\subsection{Maximum likelihood estimation}
Let $x_{1},x_{2},\cdots,x_{m}$ be an random observation of size m from our proposed Poisson-Transmuted exponential distribution. The log-likelihood function for the vector of parameter $\Theta=(\alpha,\theta)^\top$ can be written as \\
\begin{equation*}
l(\alpha,\theta|x)= m\log\theta+ \sum\limits_{i=1}^{m}\log\left(\frac{\bar{\alpha}}{(1+\theta)^{x_{i}+1}}+\frac{2\alpha}{(1+2\theta)^{x_{i}+1}} \right).
\end{equation*}

The normal equations can be obtained by taking the first
derivative with respect to both parameters. These equations are
given by
\begin{align}
\frac{\partial l}{\partial \theta}=&\frac{n}{\theta }-\sum\limits_{x=0}^m \frac{(x+1)\left(\frac{(\bar{\alpha} )}{(1+\theta )^{x+2}}+\frac{4 \alpha  }{(1+2 \theta )^{x+2}} \right)}{\left(\frac{(1-\alpha )}{(1+\theta )^{x+1}} +\frac{2 \alpha }{(1+2 \theta )^{x+1}}\right)}=0, \\
\frac{\partial l}{\partial \alpha}=&\sum\limits_{x=0}^m \frac{\left(-\frac{1}{(1+\theta )^{x+1}}+\frac{2 }{(1+2 \theta )^{x+1}} \right)}{\left(\frac{(\bar{\alpha} )}{(1+\theta )^{x+1}} +\frac{2 \alpha }{(1+2 \theta )^{x+1}}\right)}=0.
\end{align}

The solution of above two equations gives the maximum likelihood
estimator of parameter $\alpha$ and $\theta$ and can be solved
numerically or direct numerical search for global maximum  of the
log likelihood surface. On may also use nlm() function in open
source R-software for determine the global maxima. Further, the
second order derivative of log-likelihood function are

\begin{equation}
\frac{\partial^2 l}{\partial \alpha^2 }= -\sum_{i=1}^n \left(\frac{\left(-\frac{\theta }{(1+\theta )^{x_i+1}} +\frac{2 \theta  }{(1+2 \theta )^{x_i+1}}\right)^2}{\left(\frac{(1-\alpha ) \theta }{(1+\theta )^{x_i+1}}+\frac{2 \alpha  \theta }{(1+2 \theta )^{x_i+1}}\right)^2}\right),
\end{equation}

\begin{align}
\nonumber \frac{\partial^2 l}{\partial \theta \partial \alpha}= \sum _{i=1}^n & \left(\frac{-\frac{1}{(1+\theta )^{x_i+1}}+\frac{2}{(1+2 \theta )^{x_i+1}} +\frac{\theta }{ (1+\theta )^{x_i+2}}(x_i+1)-\frac{4 \theta }{(1+2\theta )^{x_i+2}}(x_i+1)}{\frac{(1-\alpha ) \theta }{(1+\theta )^{x_i+1}} +\frac{2 \alpha \theta  }{(1+2 \theta )^{x_i+1}}} \right. \\
&\left.-\frac{\theta \left(-\frac{1}{(1+\theta )^{x_i+1}}+\frac{2}{ (1+2 \theta )^{x_i+1}}\right) \left(\frac{(\bar{\alpha }) (1-\theta x_i)}{(1+\theta)^{x_i+2}}+\frac{2 \alpha  (1-2 \theta  x_i)}{(1+2 \theta )^{x_i+2} }\right)}{\left(\frac{(\bar{\alpha }) \theta }{(1+\theta )^{x_i+1}} +\frac{2 \alpha \theta}{(1+2 \theta )^{x_i+1}}\right)^2}\right),
\end{align}

\begin{align}
\nonumber \frac{\partial^2 l}{\partial \theta^2}= -\frac{n}{\theta^2}-\sum _{i=1}^n & \left(\frac{\left(\frac{(1-\alpha)}{ (1+\theta )^{x_i+2}} (1-\theta x_i)+\frac{2\alpha}{(1+2\theta )^{x_i+2}} (1-2\theta x_i)\right)^2}{\left(\frac{(1-\alpha ) \theta }{(1+\theta)^{x_i+1}} +\frac{2 \alpha  \theta}{(1+2 \theta )^{x_i+1}}\right)^2}\right.\\
 &\left. +\frac{(x_i+1) \left(\frac{(1-\alpha )}{(1+\theta)^{x_i+3}} (2-\theta x_i)+\frac{8 \alpha  }{(1+2 \theta )^{x_i+3}}(1-\theta x_i)\right)}{\frac{(1-\alpha ) \theta  }{(1+\theta )^{1+x_i}}+\frac{2 \alpha  \theta }{(1+2\theta )^{1+x_i}}}\right).
\end{align}

The fisher information matrix can be computed by using the approximations
\begin{align}
\nonumber E\left(\frac{\partial^2 l}{\partial \alpha^2} \right) \approx \nonumber &\left(\frac{\partial^2 l}{\partial \alpha^2} \right)_{\hat{\alpha},\hat{\theta}},  \\
\nonumber E\left(\frac{\partial^2 l}{\partial \alpha \partial \theta} \right) \approx \nonumber&\left(\frac{\partial^2 l}{\partial \alpha \partial \theta} \right)_{\hat{\alpha},\hat{\theta}},  \\
\nonumber E\left(\frac{\partial^2 l}{\partial \theta^2} \right) \approx \nonumber &\left(\frac{\partial^2 l}{\partial \theta^2} \right)_{\hat{\alpha},\hat{\theta}},
\end{align}
where $\hat{\alpha}$ and $\hat{\theta}$ be the maximum likelihood estimator of $\alpha$ and $\theta$.

\section{Collective risk model}
In non-life insurance portfolio, say, motor insurance, the aggregate loss $(S)$ is a random variable defined as sum of claims incurred in a certain period of time. Let us consider now the following actuarial model. Let $X$ be the number of claims in a portfolio of policies in a time period. Let $Y_i$, $i=1,2,\dots$ be the amount of the $i$--th claim and $S=Y_1+Y_2+\dots+Y_X$ the aggregate claims generated by the portfolio in the period under consideration. As usual, two fundamental assumptions are made in risk theory: (1) The random variables $Y_1,Y_2,\dots$ are independent and identically distributed and follows a discrete (continuous) distribution with pmf (pdf) $h(y)$ and  (2) The random variables $X,Y_1,Y_2,\dots$ are mutually independent. The distribution of the aggregate claims $S$ is called the compound distribution and assuming that $Y_i,\;i=1,2,\dots X$ are discrete random variables, the pdf of $S$ is $f_S(y)=\sum_{x=0}^\infty p_x h^{*n}(y)$, where $h^{\ast x}(\cdot)$ denotes the $x$--fold convolution of $h(\cdot)$ and
$p_x$ is given in (\ref{2}).
There exists an extensive literature dealing with compound mixture Poisson distributions see Willmot, (1986,1993) and Antzoulakos and Chadjiconstantinidis (2004). An extensive review of the topic can be found in Sundt and Vernic (2009). Based on the recursion provided by Panjer (1981) for the Poisson distribution, Sundt and Vernic (2009, chapter 3, p.68) developed a simple algorithm to provide the probabilities of the random variable $S$ when the amount of the single claim follows a discrete distribution with pmf $h(x)$.\\
From the Panjer (1981) recursion for the total claim amount when the pmf of the Poisson distribution  is assumed as the
distribution of the number of claims is given by
\begin{eqnarray}
f_S(x|\lambda)=\frac{\lambda}{x}\sum_{y=1}^{x} y h(y) f_S(x-y|\lambda), \quad x=1,2,\dots\label{recursion}
\end{eqnarray}
while $f_S(0|\lambda)=e^{-\lambda}$. Following Sundt and Vernic (2009, p.68) we get, after multiplying in (\ref{recursion}) in both sides by $\lambda^i f(\lambda)$ and integrating we have that,
\begin{eqnarray*}
f_S^i(y)=\frac{1}{y}\sum_{x=1}^{y}y h(x) f_S^{i+1}(y-x),\quad i=0,1,\dots; y=1,2,\dots
\end{eqnarray*}
where $f_S^{i}(y)=\int_0^{\infty}\lambda^i
f_S(y|\lambda)f(\lambda)d\lambda$. Now, starting with
$f_S(0)=p_0$,  the probabilities $f_S(1), f_S(2),\dots$  can be evaluated by the algorithm described in Sundt and Vernic (2009,p.68) and having into account (\ref{2}).\\

For more detail on classic risk model, see Freifelder (1974), Rolski et al. (1999), Nadarajah and Kotz (2006a and 2006b) and reference therein. Here, we consider two such situations: In first situation, the primary distribution is as defined in Section 2 and claim severity distribution as exponential distribution with parameter ($\alpha$) and as we know, Erlang loss distribution may arise in insurance settings when the individual claim amount is the sum of exponentially distributed claims hence in second situation, Erlang distribution with parameters $r$ and $\alpha$ is considered as secondary distribution.  \\

\begin{theorem} If we assume a Poisson Transmuted Exponential distribution with parameter $(\alpha,\theta)$ as primary distribution and and an Exponential distribution with parameter $(\lambda)$ as secondary distribution, then the pdf of aggregate loss random variable $S= \sum\limits_{i=0}^{X}Y_{i}$ is given by \\
\begin{equation*}
f_{S}(y)=\theta \left[\frac{\bar{\alpha}\lambda e^{\frac{-\theta\lambda y}{1+\theta}} }{\left(1+\theta\right)^{2}}+ \frac{2\alpha\lambda e^{\frac{-2\theta\lambda y}{1+2\theta}} }{\left(1+2\theta\right)^{2}}\right]\\  , \qquad  \text{for} \quad y > 0,
\end{equation*}
whereas,
\begin{equation*}
f_{S}(0)=\theta\left(\frac{\bar{\alpha}}{(1+\theta)}+\frac{2\alpha}{(1+2\theta)} \right).
\end{equation*}
\end{theorem}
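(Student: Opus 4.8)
The plan is to use the standard compound-distribution representation of $S$. Conditioning on $X=n$, the aggregate claim is the $n$-fold convolution $h^{\ast n}$ of the $\mathrm{Exp}(\lambda)$ density, which is the Erlang (Gamma) density with shape $n$ and rate $\lambda$, namely $h^{\ast n}(y)=\lambda^{n}y^{n-1}e^{-\lambda y}/(n-1)!$ for $y>0$ and $n\ge 1$; for $n=0$ the convolution is a unit mass at the origin. Hence, since $X\ge 1$ forces $S>0$ almost surely, $S$ has an atom at $0$ of size $P(X=0)=p_{0}$, which by Theorem 1 equals $\theta\bigl(\bar\alpha/(1+\theta)+2\alpha/(1+2\theta)\bigr)$, giving $f_{S}(0)$. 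On $(0,\infty)$ the density is
\[
f_{S}(y)=\sum_{n=1}^{\infty}p_{n}\,\frac{\lambda^{n}y^{n-1}e^{-\lambda y}}{(n-1)!},
\]
with $p_{n}$ the pmf from Theorem 1.

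Next I would substitute $p_{n}=\theta\bar\alpha/(1+\theta)^{n+1}+2\theta\alpha/(1+2\theta)^{n+1}$ and split the series into two sums treated identically. Re-indexing with $m=n-1$, the first sum becomes
\[
\frac{\theta\bar\alpha\lambda\, e^{-\lambda y}}{(1+\theta)^{2}}\sum_{m=0}^{\infty}\frac{1}{m!}\left(\frac{\lambda y}{1+\theta}\right)^{m}=\frac{\theta\bar\alpha\lambda}{(1+\theta)^{2}}\,e^{-\lambda y}\,e^{\lambda y/(1+\theta)}=\frac{\theta\bar\alpha\lambda}{(1+\theta)^{2}}\,e^{-\theta\lambda y/(1+\theta)},
\]
using $1-1/(1+\theta)=\theta/(1+\theta)$ and the exponential series; the second sum is the same with $1+\theta$ replaced by $1+2\theta$ and an extra factor $2$, producing the second term of the claimed $f_{S}(y)$. (As a sanity check, integrating these two terms over $(0,\infty)$ gives $\bar\alpha/(1+\theta)+\alpha/(1+2\theta)$, which together with $f_{S}(0)$ sums to $\bar\alpha+\alpha=1$.)

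The only point needing a word of care is the interchange of summation implicit in recognising the exponential series: for fixed $y>0$ one has $\sum_{n}|p_{n}|h^{\ast n}(y)\le C\sum_{n}(1+2\theta)^{-(n+1)}\lambda^{n}y^{n-1}/(n-1)!<\infty$ for a constant $C$ (and for $0\le\alpha\le 1$ all terms are already nonnegative), so Tonelli/Fubini applies and the rearrangement and term-by-term summation are legitimate. I do not anticipate any genuine obstacle; once the Erlang form of $h^{\ast n}$ is in hand the argument is bookkeeping. Alternatively, one may shortcut the algebra via Remark~(i): $p_{x}$ is the $\bar\alpha$--$\alpha$ mixture of $\mathrm{Geom}(1/(1+\theta))$ and $\mathrm{Geom}(1/(1+2\theta))$, and the compound operation is linear in the primary law, so $f_{S}$ is the corresponding mixture of two compound-geometric/exponential distributions, each of which is classically an atom at $0$ plus an exponential tail — reproducing the stated formula directly.
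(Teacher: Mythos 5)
Your argument is correct and is essentially the paper's own proof: both substitute the Erlang ($n$-fold exponential convolution) density into $f_S(y)=\sum_{n\ge 1}p_n h^{\ast n}(y)$, recognise the exponential series after reindexing, and record the atom $f_S(0)=p_0$. Your added normalisation check and the Fubini remark are harmless extras; no substantive difference from the paper's route.
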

\begin{proof} By assuming that the claim severity follows an exponential distribution with parameter  $\lambda >0$, since the $n^{th}$ fold convolution of exponential distribution is gamma distribution with parameter m and $\lambda$, the $n^{th}$fold convolution is given by \\
\begin{equation*}
f^{*x}(y)=\frac{\lambda^{x}}{(x-1)!}y^{x-1}e^{-\lambda y} , x = 1,2,...
\end{equation*}

Then the pdf of random variable $S$ is given by \\
\begin{equation*}
\begin{split}
f_{S}(y)=& \sum\limits_{x=1}^{\infty}\theta\left( \frac{\bar{\alpha}}{(1+\theta)^{x+1}}+\frac{2\alpha}{(1+2\theta)^{x+1}}\right) \frac{\lambda^{x}y^{x-1}e^{-\lambda y}}{(x-1)!}\\
=\;&\theta e^{-\lambda y}\sum\limits_{x=1}^{\infty}\theta\left( \frac{\bar{\alpha}}{(1+\theta)^{x+1}}\lambda^{x}\frac{y^{x-1}}{(x-1)!}+\frac{2\alpha}{(1+2\theta)^{x+1}}\lambda^{x}\frac{y^{x-1}}{(x-1)!}\right)\\
=\;&\theta e^{-\lambda y}\left[\frac{\bar{\alpha}\lambda e^{\frac{\lambda y}{1+\theta}} }{\left(1+\theta\right)^{2}}+ \frac{2\alpha\lambda e^{\frac{\lambda y}{1+2\theta}} }{\left(1+2\theta\right)^{2}}\right]\\
=\;&\theta \left[\frac{\bar{\alpha}\lambda e^{\frac{-\theta\lambda y}{1+\theta}} }{\left(1+\theta\right)^{2}}+ \frac{2\alpha\lambda e^{\frac{-2\theta\lambda y}{1+2\theta}} }{\left(1+2\theta\right)^{2}}\right].
\end{split}
\end{equation*}
\end{proof}

\begin{theorem} If we assume a Poisson Transmuted Exponential distribution with parameter $(\alpha,\theta)$ as primary distribution and an Erlang distribution $(2,\lambda)$ with parameter $(\lambda)>0$, as secondary distribution then the pdf of aggregate loss random variable $S= \sum_{i=0}^{X}Y_{i}$ is given by \\
\begin{equation*}
f_{S}(y)=\lambda\theta e^{-\lambda y}\left(\frac{\bar{\alpha}\sinh\left( \frac{y\lambda}{\sqrt{(1+\theta)}}\right)}{(1+\theta)^{3/2}} +\frac{2\alpha \sinh\left(\frac{y\lambda}{\sqrt{(1+2\theta)}} \right) }{(1+2\theta)^{3/2}}\right)\\
\end{equation*}
with \begin{equation*}
f_{S}(0)=\theta\left(\frac{\bar{\alpha}}{(1+\theta)}+\frac{2\alpha}{(1+2\theta)} \right).
\end{equation*}
\end{theorem}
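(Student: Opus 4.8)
The plan is to mirror the proof of the previous theorem, replacing the exponential convolution by the Erlang one. First I would record that the $x$--fold convolution of an Erlang$(2,\lambda)$ density is again a gamma (Erlang) density, this time with shape $2x$ and rate $\lambda$, that is $h^{\ast x}(y)=\frac{\lambda^{2x}y^{2x-1}e^{-\lambda y}}{(2x-1)!}$ for $y>0$; this is immediate because convolving independent gamma densities with a common rate simply adds their shape parameters, and each Erlang$(2,\lambda)$ factor contributes shape $2$. Then, starting from $f_{S}(y)=\sum_{x=1}^{\infty}p_{x}\,h^{\ast x}(y)$ with $p_{x}$ as in (\ref{2}), I would insert this convolution, pull the common factor $\theta e^{-\lambda y}$ out of the sum, and split what remains into a $\bar{\alpha}$--piece and a $2\alpha$--piece.

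Next I would evaluate each piece by the same device. Writing $c$ for either $1+\theta$ or $1+2\theta$, the series to be summed is $\sum_{x=1}^{\infty}\frac{1}{c^{\,x+1}}\frac{\lambda^{2x}y^{2x-1}}{(2x-1)!}$. Factoring out $1/c$ and $1/y$ and putting $z=\lambda y/\sqrt{c}$ turns this into $\frac{1}{cy}\sum_{x=1}^{\infty}\frac{z^{2x}}{(2x-1)!}$, and the key identity is $\sum_{x=1}^{\infty}\frac{z^{2x}}{(2x-1)!}=z\sum_{k=0}^{\infty}\frac{z^{2k+1}}{(2k+1)!}=z\sinh z$, obtained by the shift $x=k+1$ together with the Taylor series of $\sinh$. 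Substituting back gives $\frac{z\sinh z}{cy}=\frac{\lambda\sinh(\lambda y/\sqrt{c})}{c^{3/2}}$; applying this with $c=1+\theta$ (carrying the weight $\bar{\alpha}$) and with $c=1+2\theta$ (carrying the weight $2\alpha$) and reinstating $\theta e^{-\lambda y}$ reproduces exactly the asserted formula for $f_{S}(y)$. The atom $f_{S}(0)=\theta\left(\frac{\bar{\alpha}}{1+\theta}+\frac{2\alpha}{1+2\theta}\right)$ then follows because only the $x=0$ term of the compound sum carries mass at the origin, and this value is just $p_{0}$, read off from (\ref{2}) at $x=0$.

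The computation is essentially routine; the only point that needs a little care is the reindexing and substitution that exposes the $\sinh$ series, together with keeping track of the half--integer powers $(1+\theta)^{3/2}$ and $(1+2\theta)^{3/2}$ produced by the $\sqrt{c}$ hidden in $z$. Interchanging the summation with these manipulations raises no convergence issue, since every series involved is a subseries of an everywhere convergent exponential-type series.
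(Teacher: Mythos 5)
Your proposal is correct and follows essentially the same route as the paper: write $f_S(y)=\sum_{x\ge 1}p_x h^{\ast x}(y)$ with $h^{\ast x}$ the gamma/Erlang$(2x,\lambda)$ density, split into the $\bar{\alpha}$ and $2\alpha$ pieces, and resum each via the $\sinh$ series to obtain the stated closed form, with $f_S(0)=p_0$. Your substitution $z=\lambda y/\sqrt{c}$ and the explicit reindexing $\sum_{x\ge1}z^{2x}/(2x-1)!=z\sinh z$ merely spell out the step the paper leaves implicit, so there is nothing to add.
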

\begin{proof} By assuming that the claim severity follows Erlang (2,$\lambda$) distribution, then the $n ^{th}$ fold convolution of Erlang distribution is gamma distribution with parameter $(2n, \lambda)$ .The $n^{th}$ fold convolution is given by \\
\begin{equation*}
f^{*x}(y)=\frac{\lambda^{2x}}{(2x-1)!}y^{2x-1}e^{-\lambda y} , x=1,2,...,
\end{equation*}
Then, the pdf of the aggregate random variable S is given by \\
\begin{equation*}
\begin{split}
f_{S}(y)=& \sum\limits_{x=1}^{\infty}\theta\left( \frac{\bar{\alpha}}{(1+\theta)^{x+1}}+\frac{2\alpha}{(1+2\theta)^{x+1}}\right) \frac{\lambda^{2x}y^{2x-1}e^{-\lambda y}}{(2x-1)!}\\
=& \theta e^{-\lambda y}\sum\limits_{x=1}^{\infty}\theta\left( \frac{\bar{\alpha}}{(1+\theta)^{x+1}}\lambda^{2x}\frac{y^{2x-1}}{(2x-1)!}+\frac{2\alpha}{(1+2\theta)^{x+1}}\lambda^{2x}\frac{y^{2x-1}}{(2x-1)!}\right) \\
=& \lambda\theta e^{-\lambda y}\left(\frac{\bar{\alpha}\sinh\left( \frac{y\lambda}{\sqrt{(1+\theta)}}\right)}{(1+\theta)^{3/2}} +\frac{2\alpha \sinh\left(\frac{y\lambda}{\sqrt{(1+2\theta)}} \right) }{(1+2\theta)^{3/2}}\right).
\end{split}
\end{equation*}
\end{proof}

\section{Data analysis}
\subsection{Count data modelling}
In this section, the applicability of Poisson Transmuted Exponential distribution has been shown by considering a data set representing epileptic seizure counts considered earlier by Albert(1991) and Hand et. al.(1994) p. 133 and has compared with following distributions namely\\
\begin{enumerate}
\item[i] Generalized Poisson-Lindley Distribution($\mathcal{GPL}(\alpha,\theta)$)(Mahmoudi and Zakerzadeh(2010)):
\begin{equation*}
f(x;\alpha,\theta)=\frac{\Gamma(x+\alpha)}{x!\Gamma(\alpha+1)}\frac{\theta^{\alpha+1}}{(1+\theta)^{1+x+\alpha}}\left( \alpha+\frac{x+\alpha}{1+\theta}\right).
\end{equation*}
\item[ii] Weighted Generalized Poisson Distribution
    ($\mathcal{WGPD}(a,s,b)$)(Chakraborty (2010)):
\begin{equation*}
G(u)=t^{-s}\frac{K(at,s,bt)}{K(a,s,b)}, \quad \text{where} \quad u=te^{b(1-t)}.
\end{equation*}
\item[iii] Poisson Distribution $(\mathcal{P(\lambda)})$:
\begin{equation*}
P(x)=\frac{e^{-\lambda}\lambda^{x}}{x!},  \lambda>0.
\end{equation*}
\item[iv] Negative Binomial Distribution ($\mathcal{NB(}r,p)$):
\begin{equation*}
f(x)=\binom{x+r-1}{x}p^{x}(1-p)^{r},         r>0,   p\in(0,1).
\end{equation*}
\item[v] A New Generalized Two Parameter Poisson-Lindley Distribution ($\mathcal{NGPL}(\alpha,\theta)$) Bhati et al. (2015)):
\begin{equation*}
f(x;\alpha,\theta)=\frac{\theta^{2}}{(\theta+\alpha)(1+\theta)^{x+1}}\left( 1+\frac{\alpha(x+1)}{(1+\theta)}\right),\quad x=0,1,\dots
\end{equation*}
\end{enumerate}

\begin{table}[htbp]
  \centering
  \caption{Distribution of epileptic seizure counts}
\small\addtolength{\tabcolsep}{-3pt}
    \begin{tabular}{rrrrrrrr}
    \hline
          & Observed & \multicolumn{6}{c}{Expected frequency} \\
    \hline
    Count & Frequency & P & $\mathcal{NB}$  & $\mathcal{WGP}$  & $\mathcal{GPL}$   & $\mathcal{NGPL}$ & $\mathcal{PTE}$ \\
    0     & 126   & 74.94 & 91    & 118.11 & 121.51 & 122   & 121.925 \\
    1     & 80    & 115.71 & 86.6  & 95.81 & 92    & 91    & 91.6166 \\
    2     & 59    & 89.34 & 63.37 & 59.89 & 59    & 58.74 & 58.5609 \\
    3     & 42    & 46    & 42.57 & 34.49 & 35.1  & 35.22 & 34.7734 \\
    4     & 24    & 17.75 & 27.6  & 19.24 & 20.1  & 20.52 & 19.8403 \\
    5     & 8     & 5.48  & 17.6  & 10.59 & 11.18 & 11.22 & 11.0557 \\
    6     & 5     & 1.41  & 10.5  & 5.81  & 6.1   & 6.39  & 6.07055 \\
    7     & 4     & 0.31  & 6.52  & 3.18  & 3.3   & 3.25  & 3.30188 \\
    8     & 3     & 0.06  & 5     & 3.88  & 2.71  & 2.5   & 3.85564 \\ \hline
    Total & 351   & 351   & 351   & 351   & 351   & 351   & 350.99997 \\ \hline
          &       &       &       &       &       &       &  \\
          & parameter & $\hat{\lambda}=1.544$ & $\hat{r}=1.757$ & $\hat{a}=1.089$ & $\hat{\theta}$=1.139 & $\hat{\theta}=1.116$ & $\hat{\alpha}=-0.701$ \\
          &       &       &  $\hat{p}=0.463$ &  $\hat{b}=0.295$ &  $\hat{\alpha}=1.292$ &  $\hat{\alpha}=2.906$ &  $\hat{\theta}=0.873$ \\
          &       &       &       & $\hat{s}$=-1      &       &       &  \\ \hline
          & log likelihood & --636.05 & --595.22 & --595.83 & --594.61 & --594.48 & \textbf{--594.85} \\
          & chi-square & 256.54 & 22.53 & 7.12  & 5.94  & 5.75  & \textbf{5.36} \\
    \hline
    \end{tabular}%
  \label{tab:addlabel}%
\end{table}%

\subsection{Count regression including covariates}
In last two decade researchers have contributed significantly in the area of counts regression modelling, for example Duan et al.(1983), Christensen et al.(1987), Cameron et al.(1998), Cartwright et al.(1992) and Deb and Trivedi(1997), G\'omez--D\'eniz (2010). Here, in this section, we present the application of the proposed distribution in modelling the situation when the count variable $(Y_i)$ also called response variable depends on one or more than one exogenous variable $(\underline{x_i}=x_1,x_2,\cdots,x_s)$. \\
In order to achieve this goal, let us consider the following re-parametrize
\begin{equation}
\nu=2-\alpha \qquad \text{and} \qquad  \theta=\frac{\nu}{2\mu},
\end{equation}
where, $\mu_i$ is the mean of the response variable $(Y_i)$ is related with set of independent variables with log-link function $\log\left(\mu_i \right)=x_i^\top\beta$, for $i=1,2,...,s$ and $\beta$ is a vector of unknown regression coefficient. \\
The above link function will ensure the positivity of $\mu_i$ and after re-parametrization, the log-likelihood of the new model including covariates will be written as
\begin{equation}
l_n(\nu,\underline{\beta}|y_i, \underline{x_i})= \sum\limits_{i=1}^{n} \log \left( \frac{\nu(\nu-1)}{2\mu_i\left(1+\frac{\nu}{2\mu_i}\right)^{y_i+1}}+\frac{(2-\nu)\nu}{\mu_i\left(1+\frac{\nu}{\mu_i} \right)^{y_i+1}} \right)
\end{equation}
and the normal equations are
The parameters $\left(\nu,\beta_1,\beta_2,...,\beta_s\right)$ in the above log-likelihood function can be estimated by maximizing the log-likelihood function for given dataset using optim( ) function in R-program, and the initial values of the parameters were chosen from Poisson regression model.

\subsection{Illustrative example}

The US National Medical Expenditure Survey 1987/88 (NMES) data were considered which can be obtained from Journal of Applied Econometrics,  1997 Data Archive at http://qed.econ.queensu.ca/jae/1997-v12.3/ deb-Trivedi/. This data were originally used by Deb and Trivedi(1997) in their analysis of various measures of healthcare utilization. The data set consists of 4406 individuals covered by Medicare, the USA public insurance programme. Just for illustration, we are considering the Here we model, the number of stays after hospital admission (HOSP) as the response variable, because of the fact that, it have over-dispersed as well as large proportion of zero. The details and the summary statistics of this response variable as well as the set of explanatory variables is given in table (2). Here the mean and variance of the response variable indicate the over-dispersion as well as existence of large number of zeros. Hence it is adequate to apply the our model for the presented dataset.\\

\begin{tiny}
\begin{table}[htbp]
  \centering
  \caption{Summary Statistics of Response and Explanatory Variables}
    \begin{tabular}{llrr} \hline
    \multicolumn{1}{c}{Variable} & \multicolumn{1}{c}{Measurement} & \multicolumn{1}{c}{Mean} & \multicolumn{1}{c}{std dev} \\ \hline
    OFP   & Number of physician visit & 5.774 & 6.759 \\
    HOSP  & Number of hospital stays & 0.296 & 0.746 \\
    POORHLTH & Self-perceived health status, & 0.13  & 0.33 \\
    &poor=1, else=0 & & \\
    EXCLHLTH & Self-perceived health status,  & 0.08  & 0.27 \\
    & excellent =1, else = 0 && \\
    NUMCHRON & Number of chronic conditions & 1.54  & 1.35 \\
    MALE  & Gender; male = 1, else =0 & 0.404 & 0.491 \\
    SCHOOL & Number of year of education & 10.29 & 3.739 \\
    PRIVINS & Private insurance indicator, yes =1, no = 0 & 0.776 & 0.417 \\ \hline
    \end{tabular}
\end{table}
\end{tiny}

In Table (2), the maximum likelihood estimates of  Poisson $(Pois)$ regression Model, Negative Binomial $(\mathcal{NB})$ Regression model and Poisson-Transmuted Exponential$(\mathcal{PTE})$ Regression Model, including the intercept $\beta_1$ (the regression estimate when all variables in the model are evaluated at zero) were presented. For comparability of these models we use the value of the maximum log-likelihood function $(l_{max})$ and Akaike information criterion $(AIC)$ defined as $-2l_{max}+2k$, where $k$ be the no. of parameters in the regression model. These values are shown in Table (3).
\begin{landscape}
\begin{table}[htbp]
  \centering
  \caption{Maximum likelihood estimates their standard error, $t$-values and $p$-values of regression parameters}
    \begin{tabular}{lcccccccccccccc} \hline
    & \multicolumn{4}{c}{$\textit{Pois}$-Regression Model}  &       & \multicolumn{4}{c}{$\mathcal{NB}$-Regression}  &       & \multicolumn{4}{c}{$\mathcal{PTE}$-Regression Model} \\
    \cmidrule(r){2-5}  \cmidrule(r){7-10} \cmidrule(r){12-15}
          &       &       &       &       &       &       &       &       &       &       &       &       &       &  \\
          & estimate & s.e.  & $t$-value & $p$-value &       & estimate & s.e. & $t$-value   & $p$-value &       & estimate & s.e.  & $t$-value     & $p$-value \\ \hline
    Intercept$(\beta_1)$  & 1.0289 & 0.0238 & 43.2580 & 0.0000 &       & 0.9293 & 0.0546 & 17.0220 & 0.0000 &       & --3.5740 & 0.0915 & --39.0754 & 0.0000 \\
              &       &       &       &       &       &       &       &       &       &       &       &       &       &  \\
    HOSP$(\beta_2)$  & 0.1648 & 0.0060 & 27.4780 & 0.0000 &       & 0.2178 & 0.0202 & 10.7930 & 0.0000 &       & --0.0661 & 0.0298 & --2.2142 & 0.0268 \\
              &       &       &       &       &       &       &       &       &       &       &       &       &       &  \\
    POORHLTH$(\beta_3)$ & 0.2483 & 0.0178 & 13.9150 & 0.0000 &       & 0.3050 & 0.0485 & 6.2880 & 0.0000 &       & 1.0481 & 0.0705 & 14.8583 & 0.0000 \\
              &       &       &       &       &       &       &       &       &       &       &       &       &       &  \\
    EXCLHLTH$(\beta_4)$ & --0.3620 & 0.0303 & --11.9450 & 0.0000 &       & --0.3418 & 0.0609 & -5.6100 & 0.0000 &       & --1.7797 & 0.5643 & --3.1535 & 0.0016 \\
              &       &       &       &       &       &       &       &       &       &       &       &       &       &  \\
    NUMCHRON$(\beta_5)$ & 0.1466 & 0.0046 & 32.0200 & 0.0000 &       & 0.1749 & 0.0121 & 14.4660 & 0.0000 &       & 0.0435 & 0.0139 & 3.1303 & 0.0017 \\
              &       &       &       &       &       &       &       &       &       &       &       &       &       &  \\
    MALE$(\beta_6)$  & --0.1123 & 0.0129 & --8.6770 & 0.0000 &       & --0.1265 & 0.0312 & -4.0520 & 0.0000 &       & 0.4583 & 0.0645 & 7.1032 & 0.0000 \\
              &       &       &       &       &       &       &       &       &       &       &       &       &       &  \\
    SCHOOL$(\beta_7)$ & 0.0261 & 0.0018 & 14.1820 & 0.0000 &       & 0.0268 & 0.0044 & 6.1030 & 0.0000 &       & 0.1525 & 0.0112 & 13.5600 & 0.0000 \\
              &       &       &       &       &       &       &       &       &       &       &       &       &       &  \\
    PRIVINS$(\beta_8)$ & 0.2017 & 0.0169 & 11.9630 & 0.0000 &       & 0.2244 & 0.0395 & 5.6860 & 0.0000 &       & --0.1889 & 0.0333 & --5.6730 & 0.0000 \\
              &       &       &       &       &       &       &       &       &       &       &       &       &       &  \\
    dispersion & -     & -     & -     & -     &       & 1.2066 & 0.0336 & 35.5400 & 0.0000 &       & 0.0128 & 0.0002 & 67.0817 & 0.0000 \\ \hline
    \end{tabular}%
  \label{tab:addlabel}%
\end{table}%
\end{landscape}

\begin{table}[htbp]
  \centering
  \caption{log-likelihood $(l_{max})$ and Akaike information criterion $(AIC)$ Value}
    \begin{tabular}{lrrr} \hline
          & \multicolumn{3}{c}{Regression Model} \\  \cmidrule(r){2-4}
    Criterion & \multicolumn{1}{c}{\textit{Pois}} & \multicolumn{1}{c}{$\mathcal{NB}$} & \multicolumn{1}{c}{$\mathcal{PTE}$ } \\ \hline
    $LL_{max}$ & -17971.50 & -12170.55 & -10521.89 \\
    AIC   & 35959 & 24359 & 21061.78 \\ \hline
    \end{tabular}%
\end{table}%

\noindent Further, it is observed that the estimates of all parameters are found significant at $5\%$ level of significance as p-value for all the parameter estimates are less then $5\%$level of significance. It can be clearly seen from table that the log-likelihood value of $\mathcal{PTE}$ regression model is highest, whereas AIC also indicate the better fit of $\mathcal{PTE}$ regression model compared to other two models.

\section{Comments}
In this paper we have proposed a new mixed Poisson distribution and derived its distributional properties. Actuarial application to risk modelling were presented. Moreover, fitting of $\mathcal{PTED}$ indicates the flexibility and capacity of the proposed distribution in data modeling of count data and also in count regression including covariate set up. It has been shown that this distribution out performs other competing distributions.

\end{document}